\DeclarePairedDelimiter{\abs}{\lvert}{\rvert}
\newcommand{\Ripser}{\footnote{\url{https://github.com/Ripser/ripser}}~}
\begin{document}
\mainmatter  
\title{$\epsilon$-net Induced Lazy Witness Complexes\\ on Graphs}

\author{%
	Naheed Anjum Arafat \inst{1},
	Debabrota Basu \inst{2},
	St\'ephane Bressan \inst{1}
}%
\institute{
	School of Computing, National University of Singapore, Singapore \and Data Science and AI Division, Chalmers University of Technology, Sweden
}

\maketitle

\begin{abstract}
Computation of persistent homology of simplicial representations such as the Rips and the C\v{e}ch complexes do not efficiently scale to large point clouds. It is, therefore, meaningful to devise approximate representations and evaluate the trade-off between their efficiency and effectiveness. The lazy witness complex economically defines such a representation using only a few selected points, called landmarks.

Topological data analysis traditionally considers a point cloud in a Euclidean space. In many situations, however, data is available in the form of a weighted graph. A graph along with the geodesic distance defines a metric space. 
This metric space of a graph is amenable to topological data analysis.

We discuss the computation of persistent homologies on a weighted graph. We present a lazy witness complex approach leveraging the notion of $\epsilon$-net that we adapt to weighted graphs and their geodesic distance to select landmarks. We show that the value of the $\epsilon$ parameter of the $\epsilon$-net provides control on the trade-off between choice and number of landmarks and the quality of the approximate simplicial representation. 

We present three algorithms for constructing an $\epsilon$-net of a graph. We comparatively and empirically evaluate the efficiency and effectiveness of the choice of landmarks that they induce for the topological data analysis of different real-world graphs. 
\end{abstract}

\section{Introduction}\label{sec:intro}
\textbf{Topological data analysis} (TDA)~\cite{carlssontda,otter2017roadmap} involves computation of topological features of datasets, such as persistent homology classes, and the representation of these topological features using such topological descriptors as persistence barcodes~\cite{edelsbrunner2010computational}.
In this section, we elaborate the computational blocks of topological data analysis as shown in Figure~\ref{fig:pipeline}. 

\textbf{Simplicial Complex.} Topological data analysis computes the topological features of a dataset, such as persistent homology classes, by computing the topological objects called \emph{simplicial complex}. A \textbf{simplicial complex} is constructed using simplices. Formally, a \emph{$k$-simplex} is the convex-hull of $(k+1)$ data points. For instance, a 0-simplex $[v_0]$ is a single point, a 1-simplex $[v_0v_1]$ is an edge, and a 2-simplex $[v_0v_1v_2]$ is a filled triangle. 
A \textbf{$k$-homology class} is an equivalent class of such $k$-simplicial complexes that cannot be reduced to a lower dimensional simplicial complex~\cite{edelsbrunner2010computational}.

In order to compute the $k$-homology classes, a practitioner does not have direct access to the underlying space of the point cloud and it is combinatorially hard to compute the exact simplicial representation of \v{C}ech complex~\cite{zomorodian2010fast}.
Thus different approximations of the exact simplicial representation are proposed: \emph{Vietoris-Rips} complex~\cite{vietoris1927hoheren} and \emph{lazy witness} complex~\cite{de2004topological}.

\begin{figure}[t!]
    \centering
    \includegraphics[width=0.8\textwidth]{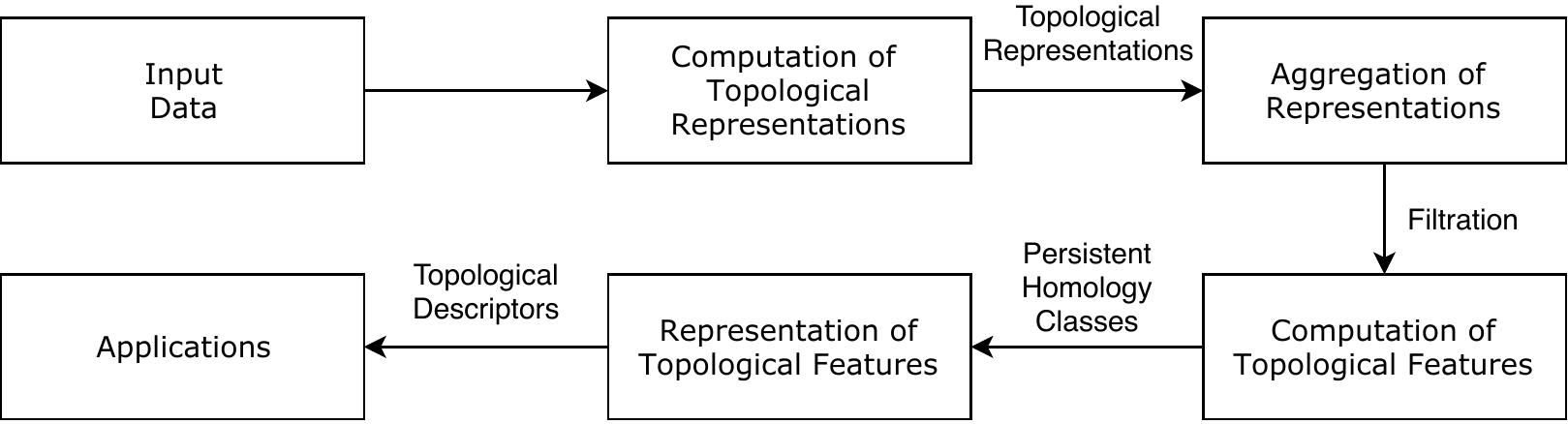}\vspace{-.5em}
    \caption{Components of topological data analysis.}\vspace{-3em}
\end{figure}\label{fig:pipeline}

\textbf{Approximate Simplicial Representations.} 
The \textbf{Vietoris-Rips complex} $R_\alpha(D)$, for a given dataset $D$ and real number $\alpha > 0$, is an abstract simplicial complex representation consisting of such $k$-simplices, where any two points $u, v$ in any of these $k$-simplices are at distance at most $\alpha$. Vietoris-Rips complex is the best possible ($\sqrt{2}$-)approximation of the \v{C}ech complex, computable with present computational resources, and is extensively used in topological data analysis literature~\cite{otter2017roadmap}. Thus, we use the Vietoris-Rips complex as the \emph{baseline representation} in this paper.
In the worst case, the number of simplices in the Vietoris-Rips complex grows exponentially with the number of data points~\cite{zomorodian2010fast}. 

Lazy witness complex~\cite{de2004topological} approximates the Vietoris-Rips complex by constructing the simplicial complexes over a subset of data points $L$, referred to as the landmarks. 
Formally, given a positive integer $\nu$ and a real number $\alpha>0$, the \textbf{lazy witness complex} $LW_\alpha(D,L,\nu)$ of a dataset $D$ is a simplicial complex over a landmark set $L$ where for any two points $v_i,v_j$ of a $k$-simplex $[v_0 v_1\cdots v_k]$, there is a point $w$ whose $(d^\nu(w) + \alpha)$-neighbourhood contains $v_i, v_j$. $d^\nu(w)$ is the  geodesic distance from point $w \in L$ to its $\nu$-th nearest point in the landmark set $L$.
In the worst case, the size of the lazy witness complexes grows exponentially with the number of landmarks. Less number of landmarks facilitates computational acceleration while produces a bad approximation of Vietoris-Rips with loss of topological features. Thus, the trade-off between the approximation of topological features and available computational resources dictates the choice of landmarks. \emph{We provide a quantification on such loss of topological features that was absent in the literature.}

\textbf{Filtration and Representation of Topological Features.} As the value of filtration parameter $\alpha$ increases, new simplices arrive and the topological features, i.e. the homology classes, start to appear. Some of the homology classes merge with the existing classes in a subsequent simplicial complex, and some of them persist indefinitely~\cite{edelsbrunner2010computational}.
In order to capture the evolution of topological structure with scale, topological data analysis techniques construct a sequence of simplicial complex representations, called a \emph{filtration}~\cite{edelsbrunner2010computational}, for an increasing sequence of $\alpha$'s. 
In a given filtration, the persistence interval of a homology class is denoted by $[\alpha_{b},\alpha_{d})$, where $\alpha_{b}$ and $\alpha_{d}$ are the filtration values of its appearance and merging respectively. The persistence interval of an indefinitely persisting homology class is denoted as $[\alpha_{b},\infty)$.
Topological descriptors, such as barcodes~\cite{collins2004barcode}, persistence diagram~\cite{edelsbrunner2010computational}, and persistence landscapes~\cite{bubenik2015statistical}, represent persistence intervals in order to draw qualitative and quantitative inference about the topological features. Distance measures between persistent diagrams such as the q-Wasserstein and Bottleneck distance~\cite{edelsbrunner2010computational} are often used to draw quantitative inference.

\textbf{Graph Topological Data Analysis.} Topological data analysis (TDA)~\cite{carlssontda} traditionally considers a point cloud in a Euclidean space. In many situations, however, data is available in the form of a weighted graph. Conveniently, the vertices of the graph with the geodesic distance define a metric space. This metric space is amenable to topological data analysis. This fact does not depend on whether the graph is embeddable in a Euclidean space or not. Though the metric space induced by a graph provides a different structure than point clouds to investigate, this metric space bears the similar issues of scalable construction of representations as well as similar approximate representations, filtrations, and topological descriptors. This motivated us to exploit the generalisability of topological data analysis and to extend the $\epsilon$-net induced lazy witness complexes~\cite{naheed2019} to graphs.

\textbf{Our Contributions.}
We investigate the computation of persistent homologies on a weighted graph. In Section~\ref{sec:net_graph}, we present a lazy witness complex approach leveraging the notion of $\epsilon$-net that we adapt to weighted graphs and their geodesic distance to select landmarks. We show that the $\epsilon$ parameter of the $\epsilon$-net gives a control on the trade-off between choice and number of landmarks and the quality of the approximate simplicial representation. 

In \Cref{subsec:prop}, we prove that an $\epsilon$-net is an $\epsilon$-approximate representation of the point cloud with respect to the Hausdorff distance. We prove that the lazy witness complex induced by an $\epsilon$-net, as a choice of landmarks, is a $3$-approximation of the induced Vietoris-Rips complex.

In \Cref{sec:algos}, we present three algorithms, namely Greedy-$\epsilon$-net, Iterative-$\epsilon$-net, and SPTpruning-$\epsilon$-net, for constructing an $\epsilon$-net of a graph. In Section~\ref{sec:experiment}, we comparatively and empirically evaluate the efficiency and effectiveness of the choice of landmarks that they induce for the topological data analysis of several real-world graphs. 

In Section~\ref{sec:conc}, we summarise the findings and the future directions of research that $\epsilon$-net opens up for graph topological data analysis.

\begin{section}{Related Works}
\textbf{Graph TDA.} Existing applications of TDA to graphs focus on characterizing networks using features computed from persistence homology classes. \cite{Carstens2013} and \cite{petri2013topological} computed persistence homology at dimension 0, 1, and 2 of the clique filtration to study weighted collaboration networks (size $\sim$36000) and weighted networks from different domains (size $\sim$54000) respectively. 
In biology domain, \cite{duman2018gene} clustered gene co-expression networks (size $\sim$400) based on distances between Vietoris-Rips persistence diagram computed on each network. 
\cite{lee2011discriminative} studied Vietoris-Rips filtration of the functional brain networks computed on $\sim$100 region of interests (points) in human brains with different clinical disorders. They focus on homology classes at dimension 0 and 1. A related line of work regarding the topology on graphs involves graphs derived as a representation of point-cloud data (e.g. the neighbourhood graph) and their usage in data clustering~\cite{chazal2013tomato} and inference of global topology from local information~\cite{l-tda}. 

\textbf{Approximate Simplicial Complexes.} 
Computational infeasibility of constructing the \v{C}ech complex and Vietoris-Rips complex motivates the development of approximate simplicial representations such as the lazy witness complexes, sparse-Rips complex~\cite{sparse_rips} and graph induced complex (GIC)~\cite{gic}.  

\textbf{Applications of $\epsilon$-net.} The concept of $\epsilon$-net is a standard concept in analysis and topology~\cite{heinonen2012lectures} originating from the idea of $(\delta, \epsilon)$-limits formulated by Cauchy. $\epsilon$-net are sets in a metric space that covers the whole space and are well-separated. 
Nets have been used in nearest-neighbour search~\cite{krauthgamer2004navigating}. 
\cite{guibas2008reconstruction} used $\epsilon$-net for manifold reconstruction.
Graph induced complex~\cite{gic} uses the cliques in the neighbourhood graph to construct simplcial complex over an $\epsilon$-net. 

\cite{har2006fast} proposed net-tree data structure to represent $\epsilon$-nets at all scales of $\epsilon$. Net-tree is used to construct approximate well-separated pair decompositions~\cite{har2006fast} and approximate geometric spanners~\cite{har2006fast}. Sparse-Rips filtration~\cite{sparse_rips} constructs a net-tree on the point-cloud to decide which neighbouring points to delete.  
Contrary to Sheehy~\cite{sparse_rips}, we use $\epsilon$-net to select a fixed subset of points, called landmarks, and compute persistent homology using them.
\end{section}
\section{$\epsilon$-net of Graphs: Definition and Analysis}
\label{sec:net_graph}
\cite{naheed2019} proposes the $\epsilon$-net induced lazy witness complex for a point cloud embedded in a Euclidean space for efficient computation of topological data analysis. In practice, the datasets may not be represented as a point cloud in a Euclidean space. The data may have different representations and non-Euclidean geometry. For instance, the dataset with contextual and relational structure is often represented using graphs. In a graphical representation of data, the vertices represent the data objects, edges represent relations among the data objects, and weights on the edges quantify the amplitude of the relation with respect to others.

In this paper, we study both weighted and unweighted simple graphs.
Since unweighted graphs are a special case of weighted graph, we construct the definitions for weighted simple graphs. A weighted simple graph~\cite{berge} $G(V,E,W)$ is a graph with a vertex set $V$, an edge set $E$, a weight function $W:V \times V \rightarrow \mathbf{R}^{+}$, and does not contain any self-edge or multiple edge. The geodesic distance $d_G(u,v)$ between a pair $u,v$ of vertices in a graph is defined as the length of the shortest path between $u$ and $v$, where the path length is defined as the sum of weights of the edges connecting the vertices $u$ and $v$~\cite{newman2004analysis}. 
In this paper we treat a graph $G = (V,E,W)$ as a set $V$ endowed with the canonical metric $d_G: V \times V \rightarrow \mathbf{R}^{+}$. Thus, a weighted simple graph $G$ transforms into a metric space $(V,d_G)$.

As we substitute the points in the point cloud with the vertices of the graph and the Euclidean distance between points with the geodesic distance between the vertices, we adapt the components of topological data analysis for simple weighted graphs.

$\epsilon$-cover is a construction used in topology to compute inherent properties of a given space~\cite{heinonen2012lectures}.
In this paper, we import the concept of $\epsilon$-cover to define $\epsilon$-net of a graph. We use the $\epsilon$-net of a graph as landmarks for constructing lazy witness complex on that graph. 

We show that $\epsilon$-net, as a choice of landmarks, has guarantees such as being an $\epsilon$-approximate representation of the graph, its induced lazy witness complex being a $3$-approximation of the corresponding Vietoris-Rips complex, and also bounding the number of landmarks for a given $\epsilon$. These guarantees are absent for the other existing landmark selection algorithms such as random and maxmin algorithms~\cite{silva2006selecting}.

\vspace{-1em}
\subsection{Defining $\epsilon$-net of a Graph}\vspace{-1em}
In this paper, we consider a graph $G = (V,E,W)$ as a finite metric space $(V,d_G)$. Since there are many graphs that are neither Euclidean nor they have Euclidean embedding, we extend the definitions of neighbourhood and cover from point-set topology as follows before defining $\epsilon$-net. 
\begin{definition}[$\epsilon$-neighbourhood]\label{def:eneighbour}
	The $\epsilon$-neighbourhood $\mathcal{N}_\epsilon(u)$ of a vertex $u \in V$ is a subset of the vertex set $V$ such that for any vertex $v \in \mathcal{N}_\epsilon(u)$ the distance $d_G(u,v) \leq \epsilon$.
\end{definition}
The notion of $\epsilon$-cover for graph generalises the geometric notion of cover using the set-theoretic notion of $\epsilon$-neighbourhood in Definition~\ref{def:eneighbour}. 
\begin{definition}[$\epsilon$-cover]
	An $\epsilon$-cover of $G$ is the finite collection $\{\mathcal{N}_{\epsilon/2}(u_i) \}$ of $\epsilon/2$-neighbourhoods of vertices in $G$ such that $\cup_i \mathcal{N}_\frac{\epsilon}{2}(u_i) = V$.
\end{definition}
By triangle inequality, any set in the $\epsilon$-cover has the property that two vertices $u,v$ in the cover have geodesic distance $d_G(u,v)\leq \epsilon$. We differentiate an $\epsilon$-cover from the set of vertices whose $\frac{\epsilon}{2}$-neighbourhood determines that cover by defining $\epsilon$-sample. 
\begin{definition}[$\epsilon$-sample]
A set $L = \{u_1,u_2,\dots,u_{|L|}\} \subseteq V$ is an $\epsilon$-sample of graph $G$ if the collection $\{\mathcal{N}_{\epsilon}(u_i):u_i\in L\}$ of $\epsilon$-neighbourhoods covers $G$ i.e. $\cup_i \mathcal{N}_\epsilon(u_i) = V$.
\end{definition}
$\epsilon$-neighbourhoods may intersect, which is not an intended property if we want to decrease the size of the $\epsilon$-sample. We combine the notion of $\epsilon$-cover with the notion of $\epsilon$-sparsity to define $\epsilon$-net.
\begin{definition}[$\epsilon$-sparse]
A set $L = \{u_1,u_2,\dots,u_{|L|}\} \subset V$ is $\epsilon$-sparse if for any distinct $u_i,u_j \in L$,  $d_G(u_i,u_j) > \epsilon$ in graph $G$.
\end{definition}
An $\epsilon$-net of graph $G$ is such a subset of $V$ which is both an $\epsilon$-sample of $G$ and $\epsilon$-sparse. The $\epsilon$-net while considered as the landmark set $L$ induces a metric subspace $(L,d_L)$ of the metric space of the graph of $(V,d_G)$, where $d_L$ is the metric induced on set $L$ by the geodesic metric $d_G$.
\begin{definition}[$\epsilon$-net] 
A subset $L \subset V$ is an $\epsilon$-net of $G$ if $L$ is $\epsilon$-sparse and an $\epsilon$-sample of $G$.
\end{definition}
\textbf{Relating $\epsilon$-net and Other Graph Theoretic Concepts.}
The definition of $\epsilon$-net on graphs generalises the notion of independent set and dominating set for undirected graphs~\cite{berge}. Any $1$-net of an undirected graph $G = (V,E)$ is an independent set of $G$ and vice versa.  Any minimal cardinality $1$-net of $G$ is a dominating set of $G$, and vice versa.\vspace*{-1.3em}


\subsection{Analysing Properties of $\epsilon$-net of a Graph}
\label{subsec:prop}
$\epsilon$-net of a simple, weighted, connected graph comes with approximation guarantees irrespective of its algorithmic construction.
In this section, we provide following three analysis of $\epsilon$-net:
\begin{enumerate}
    \item An $\epsilon$-net of a connected graph is an $\epsilon$-approximation of its set of vertices in Hausdorff distance.
    \item The lazy witness complex induced by an $\epsilon$-net is a $3$-approximation of the Vietoris-Rips complex induced by the same set.
    \item For a graph of diameter $\Delta$, there exists an $\epsilon$-net of  of size at most $(\frac{\Delta}{\epsilon})^{O(\log{\frac{\abs{V}}{\epsilon}})}$.
\end{enumerate} 

\paragraph{Graph Approximation Guarantee of an $\epsilon$-net.}\label{subsec:hausdorff}
We use Lemma \ref{lemma:esampleg} to prove that the $\epsilon$-net of a graph $G = (V,E,W)$ is an $\epsilon$-approximation of $V$ in Hausdorff metric (Theorem \ref{thm:hausdorff}). Lemma \ref{lemma:esampleg} follows from the $\epsilon$-sample property of an $\epsilon$-net.
\begin{lemma}\label{lemma:esampleg}
Let $L$ be an $\epsilon$-net of graph $G$. For any vertex $v \in V$, there exists a point $u \in L \subseteq V$ such that the geodesic distance $d_G(u,v) \leq \epsilon$
\end{lemma}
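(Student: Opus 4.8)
The plan is to derive the statement directly from the $\epsilon$-sample property that an $\epsilon$-net enjoys by definition. Since $L$ is an $\epsilon$-net of $G$, it is in particular an $\epsilon$-sample, which means the collection $\{\mathcal{N}_\epsilon(u_i) : u_i \in L\}$ of $\epsilon$-neighbourhoods covers the whole vertex set, i.e. $\cup_i \mathcal{N}_\epsilon(u_i) = V$. The $\epsilon$-sparsity half of the $\epsilon$-net definition is not needed for this lemma; only the covering half is.

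First I would fix an arbitrary vertex $v \in V$. Because the union of the $\epsilon$-neighbourhoods equals $V$, the membership $v \in \cup_i \mathcal{N}_\epsilon(u_i)$ guarantees the existence of at least one index $i$ with $v \in \mathcal{N}_\epsilon(u_i)$; otherwise $v$ would lie outside the cover, contradicting the covering equality. Next I would unpack the definition of the $\epsilon$-neighbourhood: by Definition~\ref{def:eneighbour}, $v \in \mathcal{N}_\epsilon(u_i)$ means precisely that $d_G(u_i,v) \leq \epsilon$. Setting $u := u_i \in L \subseteq V$ then produces a landmark within geodesic distance $\epsilon$ of $v$, which is exactly the claim.

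I do not anticipate a genuine obstacle here, since the lemma is essentially a restatement of the covering condition applied pointwise. The only point demanding care is keeping the radius consistent: the $\epsilon$-sample condition is stated with full $\epsilon$-neighbourhoods (radius $\epsilon$), whereas the $\epsilon$-cover definition used $\epsilon/2$-neighbourhoods, so one must invoke the $\epsilon$-sample formulation rather than the $\epsilon$-cover formulation to land on the bound $d_G(u,v)\leq\epsilon$ rather than $\epsilon/2$. With that caveat observed, the argument is immediate and requires no further estimation.
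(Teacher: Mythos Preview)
Your proposal is correct and matches the paper's own proof essentially line for line: both invoke the $\epsilon$-sample covering equality $V=\bigcup_{u\in L}\mathcal{N}_\epsilon(u)$ to place an arbitrary $v$ inside some $\mathcal{N}_\epsilon(u)$, then read off $d_G(u,v)\le\epsilon$ from the definition of $\epsilon$-neighbourhood. Your remark distinguishing the $\epsilon$-sample radius from the $\epsilon$-cover radius is a useful clarification that the paper leaves implicit.
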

\begin{proof}
Since $V = \cup_{u \in L}\mathcal{N}_\epsilon(u)$, for any vertex $v \in V$ there exists an $u \in L$ such that $v \in \mathcal{N}_\epsilon(u)$. 
As $v \in \mathcal{N}_\epsilon(u)$, by definition of $\epsilon$-neighbourhood, the length of the shortest path from $v$ to $u$ is at most $\epsilon$, i.e. $d_G(u,v) \leq \epsilon$.
\end{proof}


\begin{theorem}
\label{thm:hausdorff}
The Hausdorff distance between  $(V,d_G)$ and its $\epsilon$-net induced subspace $(L,d_L)$ is at most $\epsilon$.
\end{theorem}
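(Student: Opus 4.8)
The plan is to unwind the definition of the Hausdorff distance and reduce it to a single one-sided supremum that Lemma~\ref{lemma:esampleg} already controls. Recall that for two subsets $A,B$ of a common metric space $(V,d_G)$, the Hausdorff distance is
\[
d_H(A,B) = \max\left\{ \sup_{a \in A}\inf_{b \in B} d_G(a,b),\ \sup_{b \in B}\inf_{a \in A} d_G(a,b) \right\}.
\]
I would apply this with $A = V$ and $B = L$, noting that since $d_L$ is simply the restriction of $d_G$ to $L \times L$, distances among landmarks agree under both metrics, so there is no ambiguity in measuring everything with $d_G$ in the ambient space.

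The key observation is that the two directional terms are asymmetric in difficulty because $L \subseteq V$. For the term $\sup_{u \in L}\inf_{v \in V} d_G(u,v)$, every landmark $u \in L$ is itself a vertex of $V$, so choosing $v = u$ gives $\inf_{v \in V} d_G(u,v) = 0$; hence this entire term vanishes. The Hausdorff distance therefore collapses to the single remaining term $\sup_{v \in V}\inf_{u \in L} d_G(v,u)$, i.e. the worst-case distance from a vertex of the graph to its nearest landmark.

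It remains to bound this term by $\epsilon$, which is exactly where the $\epsilon$-sample property enters through Lemma~\ref{lemma:esampleg}. For an arbitrary $v \in V$, the lemma guarantees a landmark $u \in L$ with $d_G(u,v) \le \epsilon$, so $\inf_{u \in L} d_G(v,u) \le \epsilon$. Since this holds uniformly over every $v \in V$, taking the supremum yields $\sup_{v \in V}\inf_{u \in L} d_G(v,u) \le \epsilon$, and the maximum of this with the vanishing term is still at most $\epsilon$.

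I do not anticipate a genuine obstacle here: the mathematical content of the result is carried entirely by Lemma~\ref{lemma:esampleg}, and the only points requiring care are the bookkeeping of the two directional suprema and the remark that the induced metric $d_L$ coincides with $d_G$ on $L$, so that the inclusion $L \subseteq V$ can legitimately be exploited to annihilate one side of the Hausdorff distance.
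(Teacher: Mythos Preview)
Your proposal is correct and follows essentially the same approach as the paper: bound both directional terms in the Hausdorff distance and invoke Lemma~\ref{lemma:esampleg} for the $V\to L$ side. Your treatment of the $L\to V$ side is in fact slightly sharper---you observe it vanishes because $L\subseteq V$, whereas the paper only bounds it by~$\epsilon$---but the overall structure is the same.
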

\begin{proof}
 For any $u \in L \subseteq V$, there exists a vertex $v \in V$ such that $d_G(u,v) \leq \epsilon$, by definition of $\epsilon$-neighbourhood. Hence,
$\max_{L}{\min_{V}{d_G(u,v)}} \leq \epsilon$. 
By Lemma~\ref{lemma:esampleg},  $\max_{V}{\min_{L}{d_G(u,v)}} \leq \epsilon$. Since the Hausdorff distance $d_H(V,L)$  is defined as the maximum of $\max_{L} \min_{V}{d_G(u,v)}$ and $\max_{V}{\min_{L}{d_G(u,v)}}$, therefore $d_H(V,L)$ is upper bounded by $\epsilon$.
\end{proof}

\paragraph{Topological Approximation Guarantee of an $\epsilon$-net induced Lazy witness complex on graphs.} 
\label{subsec:theory}
In addition to an $\epsilon$-net being an $\epsilon$-approximation of the space $(V,d_G)$, we prove that the lazy witness complex induced by the $\epsilon$-net, as landmarks, is a good approximation (\Cref{thm:approx}) to the Vietoris-Rips complex on the same set of vertices. This approximation ratio is independent of the algorithm constructing the $\epsilon$-net. As a step towards \Cref{thm:approx}, we state \Cref{lemma:nn1}. \Cref{lemma:nn1} is implied by the definition of the lazy witness complex and $\epsilon$-sample. \Cref{lemma:nn1} establishes the relation between 1-nearest neighbour of points in an $\epsilon$-net.

\begin{lemma}\label{lemma:nn1}
If $L$ is an $\epsilon$-net of $(V,d_G)$, the distance $d_G(u,v)$ from any vertex $u \in L$ to its 1-nearest neighbour $v \in V$ is at most $\epsilon$.
\end{lemma}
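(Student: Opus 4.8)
The plan is to reduce the statement to the $\epsilon$-sample property of the net, in exactly the same way \Cref{lemma:esampleg} was obtained. Since $L$ is an $\epsilon$-net it is in particular an $\epsilon$-sample, so the neighbourhoods of the landmarks cover the whole vertex set, $V = \bigcup_{u_i \in L}\mathcal{N}_\epsilon(u_i)$. The key observation I would exploit is that this covering equality is nothing but a pointwise bound on the distance to the nearest landmark, and that the quantity $d^\nu$ with $\nu = 1$ entering the definition of the lazy witness complex is precisely this ``distance to the $1$-nearest neighbour''. So the proof should amount to unwinding definitions rather than to any genuine estimate.

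First I would fix $u \in L$ and make the target quantity explicit as $\min_{v} d_G(u,v)$, where $v$ ranges over the comparison set witnessing $u$ in the lazy witness construction. Next I would apply the covering equality: the relevant vertex lies in some $\mathcal{N}_\epsilon(u_i)$, and by the definition of $\epsilon$-neighbourhood its shortest-path length to that landmark satisfies $d_G(\cdot, u_i) \leq \epsilon$. Passing to the minimum over admissible neighbours can only decrease this value, which delivers the claimed bound $\min_{v} d_G(u,v) \leq \epsilon$. Because the argument uses only the geodesic metric and the covering property — not even the triangle inequality — the conclusion is independent of how the net was algorithmically produced, which is the robustness needed when feeding \Cref{lemma:nn1} into \Cref{thm:approx}.

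The step I expect to be the main obstacle is not an inequality but a bookkeeping one: pinning down against which set the ``$1$-nearest neighbour'' is measured so that the bound is both true and the version needed downstream. If the nearest neighbour is taken over all of $V$, one must rule out the degenerate configuration in which a landmark $u$ is at geodesic distance greater than $\epsilon$ from every other vertex — possible on a weighted graph with heavy incident edges, where $\mathcal{N}_\epsilon(u) = \{u\}$ — and this requires either the convention that $u$ is admissible as its own nearest neighbour or a mild non-degeneracy assumption on the edge weights. If instead the quantity of interest is the distance $d^1(\cdot)$ to the nearest \emph{landmark} actually used by the witness complex, no case analysis is needed and the bound follows verbatim from \Cref{lemma:esampleg} applied to $u \in L \subseteq V$. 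I would adopt the latter reading, so that the proof is a one-line specialisation of the $\epsilon$-sample covering property, and defer the geometric consequences to \Cref{thm:approx}.
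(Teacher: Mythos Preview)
Your approach is correct and matches the paper's: the paper does not give a written-out proof but simply remarks that \Cref{lemma:nn1} ``is implied by the definition of the lazy witness complex and $\epsilon$-sample,'' which is exactly the reduction you carry out via the covering identity $V=\bigcup_{u_i\in L}\mathcal{N}_\epsilon(u_i)$. Your discussion of the bookkeeping ambiguity (nearest neighbour in $V$ versus nearest landmark) is more careful than the paper itself, which leaves that point implicit even though both readings are invoked in the proof of \Cref{thm:approx}.
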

\Cref{thm:approx} shows that a lazy witness complex induced by an $\epsilon$-net landmarks is a $3$-approximation of the Vietoris-Rips complex on the landmarks beyond a certain value of the filtration parameter. 

\begin{theorem}
	\label{thm:approx}
	If $L$ is an $\epsilon$-net of the point cloud $V$ for $\epsilon \in \mathbf{R}^+$, $LW_{\alpha}(V,L,\nu = 1)$ is the lazy witness complex of $L$ at filtration value $\alpha$, and $R_\alpha(L)$ is the Vietoris-Rips complex of $L$ at filtration $\alpha$, 
	$R_{\alpha/3}(L) \subseteq LW_\alpha(V,L,1) \subseteq R_{3\alpha}(L)$ for $\alpha \geq 2\epsilon$.
\end{theorem}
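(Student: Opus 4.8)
The plan is to reduce both claimed inclusions to statements about edges. Both the Vietoris--Rips complex and the lazy witness complex are \emph{flag} (clique) complexes: a simplex $[v_0\cdots v_k]$ belongs to the complex precisely when every pair $\{v_i,v_j\}$ is an edge, since the Rips condition and the lazy witness condition are both imposed pairwise (the definition asks, for each pair, that \emph{some} witness see it). Consequently, if the $1$-skeleton of one complex is contained in that of another, the containment lifts to all higher simplices, because a clique of the smaller edge set is a clique of the larger. I would therefore establish the two inclusions $R_{\alpha/3}(L)\subseteq LW_\alpha(V,L,1)$ and $LW_\alpha(V,L,1)\subseteq R_{3\alpha}(L)$ only on edges, and invoke this flag property to conclude for the full complexes.

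For the right-hand inclusion I would take an edge $\{u,v\}$ of $LW_\alpha(V,L,1)$ and unfold the definition: there is a witness $w\in V$ whose $(d^1(w)+\alpha)$-neighbourhood contains both $u$ and $v$, i.e. $d_G(w,u)\le d^1(w)+\alpha$ and $d_G(w,v)\le d^1(w)+\alpha$. The key ingredient is \Cref{lemma:nn1}, which bounds the distance from any witness to its nearest landmark by $\epsilon$, giving $d^1(w)\le\epsilon$ uniformly over $w\in V$. The triangle inequality then yields $d_G(u,v)\le d_G(u,w)+d_G(w,v)\le 2\bigl(d^1(w)+\alpha\bigr)\le 2\epsilon+2\alpha$. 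Since the hypothesis $\alpha\ge 2\epsilon$ gives $2\epsilon\le\alpha$, we obtain $d_G(u,v)\le 2\epsilon+2\alpha\le \alpha+2\alpha=3\alpha$, so $\{u,v\}$ is an edge of $R_{3\alpha}(L)$.

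For the left-hand inclusion I would take an edge $\{u,v\}$ of $R_{\alpha/3}(L)$, so $d_G(u,v)\le\alpha/3$, and choose the landmark $u$ itself as a witness. Since $d^1(u)\ge 0$, both $d_G(u,u)=0\le d^1(u)+\alpha$ and $d_G(u,v)\le\alpha/3\le\alpha\le d^1(u)+\alpha$ hold, so the $(d^1(u)+\alpha)$-neighbourhood of $u$ contains both endpoints and $u$ witnesses $\{u,v\}$; hence $\{u,v\}\in LW_\alpha(V,L,1)$. Combining the two edge inclusions with the flag-complex reduction delivers the sandwich $R_{\alpha/3}(L)\subseteq LW_\alpha(V,L,1)\subseteq R_{3\alpha}(L)$ for $\alpha\ge 2\epsilon$.

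The main obstacle is the right-hand inclusion: everything hinges on controlling $d^1(w)$ uniformly over all witnesses $w\in V$ by the net parameter $\epsilon$, which is exactly where the $\epsilon$-sample property (through \Cref{lemma:nn1}) enters, and then on tracking the factor of two from the two applications of the witness inequality so that the threshold $\alpha\ge 2\epsilon$ cleanly absorbs the additive $2\epsilon$ into the multiplicative factor $3$. By contrast the left-hand inclusion is comparatively soft; note that the endpoint-as-witness argument in fact already gives $R_\alpha(L)\subseteq LW_\alpha(V,L,1)$, so the $1/3$ there is not tight and is retained only to present the result as a clean symmetric $3$-approximation.
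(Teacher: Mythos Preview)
Your proof is correct. The right-hand inclusion $LW_\alpha(V,L,1)\subseteq R_{3\alpha}(L)$ is argued exactly as in the paper: bound $d^1(w)\le\epsilon$ via \Cref{lemma:nn1}, apply the triangle inequality to get $d_G(x_i,x_j)\le 2(\epsilon+\alpha)$, and absorb $2\epsilon\le\alpha$. The reduction to edges via the flag property is also what the paper does implicitly, phrased there as ``since the argument is true for any $x_i,x_j\in\sigma_k$''.

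For the left-hand inclusion you take a genuinely different and simpler route. The paper, given an edge $[x_ix_j]\in R_{\alpha/3}(L)$, chooses as witness a point $w_t\in V$ nearest to one endpoint (say $x_j$), invokes \Cref{lemma:nn1} to get $d_G(w_t,x_j)\le\epsilon\le\alpha/2$, and then uses the triangle inequality together with $d_G(x_i,x_j)\le\alpha/3<\alpha/2$ to place $x_i$ within distance $\alpha$ of $w_t$. You instead take the landmark endpoint $u\in L\subseteq V$ itself as the witness, which immediately places both endpoints in the $(d^1(u)+\alpha)$-ball with no appeal to \Cref{lemma:nn1}, no use of the hypothesis $\alpha\ge 2\epsilon$, and no use of the factor $1/3$. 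Your observation that this in fact yields the stronger inclusion $R_\alpha(L)\subseteq LW_\alpha(V,L,1)$ is correct and is not made in the paper; the paper's more circuitous argument hides this slack. The only trade-off is one of presentation: the paper's version keeps the two halves structurally parallel (both invoke \Cref{lemma:nn1} and the threshold $\alpha\ge 2\epsilon$), whereas yours isolates precisely where the $\epsilon$-net hypothesis is genuinely needed, namely only in the right-hand inclusion.
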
\vspace{-1.5em}
\begin{proof}	
	In order to prove the first inclusion, consider a $k$-simplex $\sigma_k = [x_0 \cdots x_k]$ $\in R_{\alpha/3}(L)$. For any edge $[x_ix_j] \in \sigma_k$, let $w_t$ be the point in $V$ that is nearest to the vertices of $[x_ix_j]$. Without loss of generality, let that vertex be $x_j$. 
	Since $w_t$ is the nearest neighbour of $x_j$, by Lemma~\ref{lemma:nn1}, $d_G(w_t,x_j) \leq \epsilon \leq \frac{\alpha}{2}$. 
	Since $[x_ix_j] \in  R_{\alpha/3}$, $d_G(x_i,x_j) \leq \frac{\alpha}{3} < \frac{\alpha}{2}$. By triangle inequality, $d_G(w_t,x_i) \leq \frac{\alpha}{2} +  \frac{\alpha}{2} \leq \alpha$. Hence, $x_i$ is within distance $\alpha$ from $w_t$. The $\alpha$-neighbourhood of point $w_t$ contains both $x_i$ and $x_j$. 
	Since $d^1(w_t) > 0$, the $(d^1(w_t) + \alpha)$-neighbourhood of $w_t$ also contains $x_i,x_j$. Therefore, $[x_ix_j]$ is an edge in $LW_\alpha(V,L,1)$. Since the argument is true for any $x_i,x_j \in \sigma_k$, the $k$-simplex $\sigma_k \in LW_\alpha(V,L,1)$.
	
	In order to prove the second inclusion, consider a $k$-simplex $\sigma_k = [x_0x_1 \cdots x_k]$ $\in LW_\alpha(V,L,1)$. Therefore, by definition of lazy witness complex, for any edge $[x_ix_j]$ of $\sigma_k$ there is a witness $w \in V$ such that, the $(d^1(w) + \alpha)$-neighbourhood of $w$ contains both $x_i$ and $x_j$. Hence, $d_G(w,x_i) \leq d^1(w) + \alpha \leq \epsilon + \alpha$  (by Lemma ~\ref{lemma:nn1})$\leq 3\alpha/2$. Similarly, $d_G(w,x_j) \leq 3\alpha/2$. By triangle inequality, $d_G(x_i,x_j) \leq 3\alpha$. Therefore, $[x_ix_j]$ is an edge in $R_{3\alpha}(L)$. Since the argument is true for any  $x_i,x_j \in \sigma_k$, the k-simplex $\sigma_k \in R_{3\alpha}(L)$.  
\end{proof}

\emph{Discussion.} Theorem~\ref{thm:approx} implies that the interleaving of lazy witness filtration $LW = {LW_\alpha(L)}$ and the Vietoris-Rips filtration $R = R_\alpha(L)$ occurs when $\alpha > 2\epsilon$. As a consequence, their corresponding partial persistence diagrams $Dgm_{>2\epsilon}(LW)$ and $Dgm_{>2\epsilon}(R)$ are $3\log{3}$-approximations of each other in log-scale, by the persistence approximation lemma~\cite{sparse_rips}. In \Cref{sec:experiment}, we empirically validate this bound for the lazy witness complex induced by the $\epsilon$-net landmarks.

\paragraph{Size of an $\epsilon$-net.} 
We prove an upper bound on the size of an $\epsilon$-net of a connected unweighted graph using the \emph{doubling dimension}.

The doubling dimension~\cite{gupta2003bounded}  of a metric space $M = (X,d)$ is the smallest positive number $D$ such that any $\epsilon$-neighbourhood in $M$ can be covered by $2^D$ number of $\frac{\epsilon}{2}$-neighbourhoods. 
A metric space is called \emph{doubling} if its doubling dimension is bounded. The space $(V,d_G)$ is a doubling metric space.



Gupta et. al.~\cite{gupta2003bounded} showed that the doubling dimension $D(G)$ of an unweighted connected graph $G$ is related to its local density. The local density of an unweighted connected graph $G$, denoted $\beta(G)$, is the smallest value $\beta$ such that $\abs{\mathcal{N}_\epsilon(v)} \leq \beta \epsilon$, for all $v \in V$ and $\epsilon \in \mathbf{N}$. To be precise the doubling dimension $D(G) \leq 4 \log(2\beta(G))$~\cite{gupta2003bounded}. We use this result along with the following lemma to prove bound on the size of an $epsilon$-net of an unweighted graph in Theorem~\ref{thm:epsnetsz}.

\begin{lemma}
\label{lemma:doubling}
For any connected unweighted graph $G$ of diameter $\Delta$ and doubling dimension $D(G)$, there exists an $\epsilon$-net of size at most $(\frac{\Delta}{\epsilon})^{D(G)}$ where $\Delta \geq \epsilon \geq 1$
\end{lemma}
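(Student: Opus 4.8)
The plan is to establish the bound through a \emph{packing argument} that combines the $\epsilon$-sparsity of a net with the doubling property of the metric space $(V,d_G)$. The starting observation is that, because $G$ is connected with diameter $\Delta$, fixing an arbitrary vertex $r\in V$ yields $V=\mathcal{N}_\Delta(r)$, since every vertex lies within geodesic distance $\Delta$ of $r$. Thus the entire graph sits inside a single $\Delta$-neighbourhood, and it suffices to bound how many points of an $\epsilon$-sparse set can be packed into such a neighbourhood.

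First I would apply the doubling property repeatedly across scales. By definition of the doubling dimension $D(G)$, the neighbourhood $\mathcal{N}_\Delta(r)$ is covered by at most $2^{D(G)}$ neighbourhoods of radius $\Delta/2$; covering each of these in turn gives at most $2^{2D(G)}$ neighbourhoods of radius $\Delta/4$, and so on. Iterating $k=\lceil\log_2(2\Delta/\epsilon)\rceil$ times (which is well defined and nonnegative under the hypothesis $\Delta\geq\epsilon\geq 1$) covers $V$ by at most $2^{k\,D(G)}$ neighbourhoods, each of radius at most $\epsilon/2$. Here I would check that the doubling property is genuinely available at every scale encountered, which holds because its defining statement quantifies over all radii, and that passing through non-integer radii is harmless for the integer-valued metric of an unweighted graph.

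Next I would invoke $\epsilon$-sparsity to turn this cover into a count of net vertices. Any two vertices lying in a common neighbourhood of radius $\epsilon/2$ are at geodesic distance at most $\epsilon$ by the triangle inequality, whereas distinct points $u_i,u_j$ of an $\epsilon$-net satisfy $d_G(u_i,u_j)>\epsilon$; hence each neighbourhood in the fine cover contains at most one net vertex. Therefore $\abs{L}$ is at most the number of neighbourhoods in the cover, giving the desired polynomial bound in $\Delta/\epsilon$ with exponent $D(G)$. Existence of a net to which this applies is immediate, since a maximal $\epsilon$-sparse subset of $V$—which can be produced greedily—is automatically an $\epsilon$-sample and therefore an $\epsilon$-net.

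The step I expect to be the main obstacle is the precise constant bookkeeping needed to land the exponent base exactly on $\Delta/\epsilon$. The iteration above most naturally produces the count $2^{k\,D(G)}=(2\Delta/\epsilon)^{D(G)}$, because the halving must continue until the covering radius drops to $\epsilon/2$ in order to force at most one net vertex per neighbourhood. Matching the stated bound $(\Delta/\epsilon)^{D(G)}$ thus requires either absorbing the factor of two through a sharper stopping criterion or a slightly tighter packing estimate; reconciling this constant is where I would spend the most care.
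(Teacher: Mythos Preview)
Your approach---iterating the doubling property to refine a single $\Delta$-neighbourhood into a fine cover and then invoking $\epsilon$-sparsity so that each covering set holds at most one net vertex---is exactly the argument the paper gives. The only difference is in the bookkeeping you flagged: the paper halts the halving after $\log_2(\Delta/\epsilon)$ steps at radius-$\epsilon$ neighbourhoods (not $\epsilon/2$) and simply asserts that ``each of the $\epsilon$-neighbourhoods contain at most one $\epsilon$-net-landmark,'' thereby landing on $(\Delta/\epsilon)^{D(G)}$ directly; so the factor-of-two tension you anticipated is precisely the step the paper leaves unelaborated rather than resolves.
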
\vspace{-1.5em}
\begin{proof}
Let $\Delta$ be the diameter of an unweighted graph $G$ of doubling dimension $D(G)$. Thus, $V$ (a $\Delta$-neighbourhood) can be covered by $2^{D(G)}$ number of $\Delta/2$-neighbourhoods. Each $\Delta/2$-neighbourhood can be covered by $2^{D(G)}$ number of $\Delta/4$-neighbourhoods. Thus $V$ can be covered by $2^{2D(G)}$ number of $\Delta/4$-neighbourhoods. Repeating this $\log_2(\frac{\Delta}{\epsilon})$-times, we get that $V$ can be covered by $2^{D(G)\log_2(\frac{\Delta}{\epsilon})}$ number of $\epsilon$-neighbourhoods. Each of the $\epsilon$-neighbourhoods contain at most one $\epsilon$-net-landmark. Hence, there exists an $\epsilon$-net of size $(\frac{\Delta}{\epsilon})^{D(G)}$ for any connected, unweighted graph $G$.
\end{proof}\vspace{-1.5em}
\setlength{\textfloatsep}{6pt}
\begin{algorithm}[t!]
\caption{Greedy-$\epsilon$-Net}
\label{alg:base}
\begin{algorithmic}[1]
\Require Graph $G = (V,E,W)$, parameter $\epsilon$
\Ensure Set of Landmarks $L$
\State Initialize $L = \phi$
\State Let $nc$ be the hash table with vertices as keys, number of vertices in their $\epsilon$-cover as value.
\ForAll{$u \in V$}
    \State $nc[u] =$ \textbf{$\epsilon$-BFS$(G,u,\epsilon)$}
\EndFor
\State Initialize all vertex $u \in V$ as marked.
\Repeat
    \State Sort $nc$ in descending order of its value. 
    \If{ $marked[u]$ = False}
        \State $L.insert(u)$
        \ForAll{vertex $u'$ in u's $\epsilon$-cover}
            \State Mark $u'$ as True.
            \State Delete key $u'$ from nc.
            \ForAll{$v \in V$}
                \If{$u'$ is in $\epsilon$-cover of $v$}
                    \State Decrease $nc[v]$ by 1.
                \EndIf
            \EndFor
        \EndFor
    \EndIf
    \State Delete key $u$ from nc.
\Until{all vertices are marked} 
\end{algorithmic}
\end{algorithm}
\begin{algorithm}[t!]
\caption{$\epsilon$-BFS}
\label{alg:ebfs}
\begin{algorithmic}[1]
\Require Graph $G = (V,E,W)$, vertex $u$, parameter $\epsilon$.
\Ensure Set of vertices in $u$'s $\epsilon$-cover, $C_\epsilon$
\State Initialize Queue $Q = \{u\}$
\State initialize $C_\epsilon = \phi$
 \While{$Q \neq \phi$} 
    \State $v = \text{DEQUEUE}(Q)$
    \State $v.marked$ = True
    \ForAll{ $v' \in G.Adj[v]$}
        \If{ $v'.marked$ = False}
            \State $v'.d = v.d + W[v,v']$
            \If{ $v'.d \leq \epsilon$}
                \State $C_\epsilon = C_\epsilon \cup \{v'\}$
            \EndIf
        \EndIf
    \EndFor
 \EndWhile
\end{algorithmic}
\end{algorithm}

\begin{theorem}
\label{thm:epsnetsz}
    For any connected unweighted graph $G = (V,E)$ of diameter $\Delta$, there exists an $\epsilon$-net of size at most $(\frac{\Delta}{\epsilon})^{O(\log(\frac{\abs{V}}{\epsilon}))}$ 
\end{theorem}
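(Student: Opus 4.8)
The plan is to reduce everything to a bound on the doubling dimension and then feed it into Lemma~\ref{lemma:doubling}. That lemma already delivers an $\epsilon$-net of size at most $(\frac{\Delta}{\epsilon})^{D(G)}$, so the only remaining task is to show that the doubling dimension $D(G)$ of the unweighted connected graph $G$ satisfies $D(G) = O(\log(\frac{\abs{V}}{\epsilon}))$ and to substitute this into the exponent.

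First I would invoke the cited inequality $D(G) \le 4\log(2\beta(G))$, which turns the problem into bounding the local density. By definition $\beta(G)$ is the least $\beta$ with $\abs{\mathcal{N}_r(v)} \le \beta r$ for every vertex $v$ and every integer scale $r$, i.e. $\beta(G) = \max_{v,r}\abs{\mathcal{N}_r(v)}/r$, and since every neighbourhood sits inside $V$ we have $\abs{\mathcal{N}_r(v)} \le \abs{V}$. The decisive observation is that the covering construction underlying Lemma~\ref{lemma:doubling} never descends below scale $\epsilon$: it only halves neighbourhoods at scales $r \in [\epsilon,\Delta]$, so only the doubling behaviour at scales $r \ge \epsilon$ is ever used. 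Restricting the density to those scales gives $\max_{r \ge \epsilon}\abs{\mathcal{N}_r(v)}/r \le \max_{r\ge\epsilon}\abs{V}/r = \abs{V}/\epsilon$, hence the scale-restricted doubling dimension controlling each halving step is at most $4\log(2\abs{V}/\epsilon) = O(\log(\frac{\abs{V}}{\epsilon}))$. Substituting this into $(\frac{\Delta}{\epsilon})^{D(G)}$ yields the claimed size $(\frac{\Delta}{\epsilon})^{O(\log(\frac{\abs{V}}{\epsilon}))}$.

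The main obstacle is precisely this last reconciliation. Gupta et al.'s bound is stated in terms of the \emph{global} density $\beta(G)$, which can be as large as $\abs{V}$ (consider the centre of a star, where $\abs{\mathcal{N}_1(v)}$ is nearly $\abs{V}$), so a verbatim substitution would only give the weaker $O(\log\abs{V})$ and would lose the $\frac{1}{\epsilon}$ factor inside the logarithm. Obtaining the sharper $O(\log(\frac{\abs{V}}{\epsilon}))$ requires restricting attention to the scales $r \in [\epsilon,\Delta]$ actually traversed by the halving construction and re-deriving the per-level covering number at those scales, rather than appealing to $D(G)$ as a scale-independent constant. I expect formalising this scale-restricted version of the density-to-doubling-dimension estimate to be the delicate step; once it is in place, the remaining substitution is routine.
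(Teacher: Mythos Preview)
Your proposal is exactly the paper's approach: bound the local density by $\abs{V}/\epsilon$, feed this into Gupta et al.'s inequality $D(G)\le 4\log(2\beta(G))$ to get $D(G)=O(\log(\abs{V}/\epsilon))$, and substitute into Lemma~\ref{lemma:doubling}. The obstacle you flag---that the \emph{global} density $\beta(G)$ need not be bounded by $\abs{V}/\epsilon$ and that a scale-restricted version is what is really required---is not addressed in the paper either: it simply writes $\beta(G)=\max_{v,\epsilon}\abs{\mathcal{N}_\epsilon(v)}/\epsilon\le\abs{V}/\epsilon$, with $\epsilon$ serving at once as the bound variable in the maximum and as the fixed parameter of the theorem, and then proceeds directly to Lemma~\ref{lemma:doubling}.
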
\vspace{-1.5em}
\begin{proof}
In a connected unweighted graph $G$, the size $\abs{\mathcal{N}_\epsilon(v)}$ of $\epsilon$-neighbourhood of a vertex $v$ is greater than $\epsilon$ for $\Delta \geq \epsilon \geq 1$.
Thus, local density $\beta(G) > 1$. We observe that $\max_v \abs{\mathcal{N}_\epsilon(v)} < \abs{V}$. Thus, $\beta(G) = \max_{v,\epsilon} \frac{\abs{\mathcal{N}_\epsilon(v)}}{\epsilon}$ is at most $\frac{\abs{V}}{\epsilon}$. Applying the result from Gupta et.al.~\cite{gupta2003bounded}, the doubling dimension $D(G) \leq 4 \log(\frac{\abs{V}}{\epsilon})$. The rest  follows from~\Cref{lemma:doubling}.

\end{proof}

\section{Algorithms for Computing $\epsilon$-Net}\label{sec:algos}
In this section, we propose and elaborate three algorithms, namely Greedy-$\epsilon$-Net, Iterative-$\epsilon$-Net, and SPTprunning-$\epsilon$-net, for computing $\epsilon$-net on graphs. 
\vspace{-1em}
\subsection{Greedy-$\epsilon$-Net Algorithm}\vspace{-1em}
We propose a greedy algorithm, namely Greedy-$\epsilon$-Net, to compute a minimal-cardinality $\epsilon$-net of a graph. Greedy-$\epsilon$-Net (\Cref{alg:base}) maintains a hash table with vertices as keys and the number of vertices in their $\epsilon$-cover as values. At each step, Greedy-$\epsilon$-Net selects a vertex with the largest $\epsilon$-cover, marks the covered vertices, and updates the $\epsilon$-cover of other vertices until all the vertices are marked as covered.

\vspace{-1em}
\subsection{Iterative-$\epsilon$-Net Algorithm}\vspace{-1em}
Iterative-$\epsilon$-Net (\Cref{alg:idif}) is a diffusive algorithm that maintains a set $C_{(\epsilon,2\epsilon)}$ that contains the set of unmarked vertices that are within a ring of $(\epsilon,2\epsilon]$ distance from the current set of landmarks. We call them the \textbf{ring vertices}.
Iterative-$\epsilon$-Net also maintains another set $C_{2\epsilon}$ that contains the set of unmarked vertices that are at distance at least $2\epsilon$ but are adjacent to the bordering vertices of the cover. We call them the \textbf{enveloping vertices}.

Iterative-$\epsilon$-net, at each iteration, uniformly at random selects a vertex $u$ from the current set of enveloping vertices as next landmark, and run Partial-BFS (\Cref{alg:pbfs}) starting at $u$ to mark the vertices in its $\epsilon$-cover, as well as to update the enveloping and ring vertex sets. If enveloping vertex set is empty it selects a ring vertex uniformly at random as next landmark.

Iterative $\epsilon$-net algorithm has the property that some vertex in the $\epsilon$-cover of landmark $l_{i+1}$ is always adjacent to some vertex in the $\epsilon$-cover of landmark $l_i$. Thus the two covers are adjacent as sets.\vspace*{-1pt}


\begin{algorithm}[t!]
\caption{Iterative-$\epsilon$-Net}
\label{alg:idif}
\begin{algorithmic}[1]
\Require Graph $G = (V,E,W)$, parameter $\epsilon$
\Ensure Set of landmarks $L$
\State Initialise $L = \phi$
\State $i = 1$
\State Select initial landmark $l_i= u$ uniformly at random from $V$.
\State $L = L \cup \{u\}$
\State Let $C^{u}_{(\epsilon,2\epsilon)}$  be the set of unmarked vertices $v$ such that $\epsilon < d_G(u,v) \leq 2\epsilon$.
\State Let $C^{u}_{2\epsilon}$ be the set of vertices $v$ such that $d_G(u,v)>2\epsilon$ and are the closest.
 \Repeat 
    \State $C^{u}_{\epsilon,2\epsilon} = \textbf{PartialBFS}(G,l_i,C^{u}_{(\epsilon,2\epsilon)},C^{u}_{2\epsilon})$ 
    \If{$C^{u}_{2\epsilon}$ is empty}
        \State select $l_{i+1}$ uniformly at random from $C^{u}_{(\epsilon,2\epsilon)}$.
    \Else
        \State select $l_{i+1}$ uniformly at random from $C^u_{2\epsilon}$.
    \EndIf
    \State $L = L \cup \{l_{i+1}\}$.
    \State $i = i + 1$
    \State $u = l_{i+1}$
 \Until{all vertices are marked}
\end{algorithmic}
\end{algorithm}
\begin{algorithm}[t!]
\caption{PartialBFS}
\label{alg:pbfs}
\begin{algorithmic}[1]
\Require Graph $G$, vertex $u$, set $C^{u}_{(\epsilon,2\epsilon)}$, set $C^{u}_{2\epsilon}$.
\State Initialise Queue $Q = \{u\}$
\State $u.marked$ = True

 \While{$Q \neq \phi$} 
    \State $v = \text{DEQUEUE}(Q)$
    \ForAll{ $v' \in G.Adj[v]$}
    \If{ $v'.marked$ = False}
        \State $v'.d = v.d + W[v,v']$
        \If{ $v'.d \leq \epsilon$}
            \State $v'.marked$ = True
            \State Remove $v'$ from $C^{u}_{(\epsilon,2\epsilon)}$ and $C^{u}_{2\epsilon}$ if exists.
            \State $\text{ENQUEUE}(Q,v')$
        \ElsIf{$\epsilon< v'.d \leq 2\epsilon$}
            \State $C^{u}_{(\epsilon,2\epsilon)} = C^{u}_{(\epsilon,2\epsilon)} \cup \{v'\}$
            \State $\text{ENQUEUE}(Q,v')$
        \Else
            \State $C^{u}_{2\epsilon} = C^{u}_{2\epsilon} \cup \{v'\}$
        \EndIf
    \EndIf
    
    \EndFor
 \EndWhile
\end{algorithmic}
\end{algorithm}
\vspace*{-.5em}
\subsection{Sparsity based Pruning Algorithm on Shortest Path Tree}\vspace*{-.5em}
We propose SPTpruning-$\epsilon$-Net algorithm (Algorithm~\ref{alg:treeeps}) that constructs a shortest path tree of the graph and uses the tree to compute an $\epsilon$-net.
Algorithm~\ref{alg:treeeps} computes a shortest path tree rooted at a vertex chosen uniformly at random. The algorithm uses $\epsilon$-BFS (Algorithm~\ref{alg:ebfs}) to construct a preliminary BFS spanning tree of the graph (line 2). Then the algorithm constructs an $\epsilon$-net of the BFS-tree (line 4-28). It does so by traversing the tree level-order starting from root, running $\epsilon$-BFS in the tree to mark covered vertices, and add the set of unmarked vertices at level $\epsilon+1$  as candidates for landmarks. 

An $\epsilon$-net $L_{SPT}$ of a BFS-tree $SPT(G)$ of a graph $G$ has the property that, any vertex $v \in SPT(G) \subset G$ that is covered by some vertex $u \in L_{SPT}$, is also covered by the $u \in V$ in the graph as well. This property follows from the fact that, the distance between any vertex pair $d_G(u,v)$ in the graph can only be shorter than their distance $d_{SPT}(u,v)$ in the BFS-tree. 

Unless one of the vertex is a root, the distance between any pair of vertices in the tree is not guaranteed to be the shortest in the graph. Thus an $\epsilon$-net of a BFS-tree does not have $\epsilon$-sparsity in the graph containing the BFS-tree as a subgraph. As a remedy, Algorithm~\ref{alg:prune} prunes vertices from the candidate landmarks that are covered by other candidate landmark. 


\begin{algorithm}[t!]
\caption{SPTpruning-$\epsilon$-net}
\label{alg:treeeps}
\begin{algorithmic}[1]
\Require Graph $G$, Diameter $\Delta$, parameter $\epsilon$
\Ensure Set of landmarks $C$
\State Select a vertex $u$ uniformly at random from $V$.
\State Run \textbf{$\epsilon$-BFS(G,u,$\Delta$)} to construct a BFS spanning tree rooted at $u$
\State Let $SPT$ be the tree.
\State Initialise Queue $Q = \{u\}$
\State $u.marked$ = True
\State Let $\epsilon$-net of the SPT $C_\epsilon = \phi$ 
\Repeat 
 \While{$Q \neq \phi$} 
    \State $u' = \text{DEQUEUE}(Q)$
    \State u'.marked = True
    \State $u'.d = 0$
    \State Initialise Queue $Q' = \{u'\}$
    \While{$Q' \neq \phi$} 
        \State $v = \text{DEQUEUE}(Q')$
        \ForAll{ $v' \in SPT.Adj[v]$}
        \If{ $v'.marked$ = False}
            \State $v'.d = v.d + W[v,v']$
            \If{ $v'.d \leq \epsilon$}
                \State $v'.marked$ = True
                \State $\text{ENQUEUE}(Q',v')$
            \Else
                \State $C_\epsilon = C_{\epsilon} \cup \{v'\}$
                \State $\text{ENQUEUE}(Q,v')$
            \EndIf
        \EndIf
        
        \EndFor
    \EndWhile
 \EndWhile
 \Until{All vertex in $SPT$ are marked}
\State C = Prune(G,$C_\epsilon$,$\epsilon$)
\end{algorithmic}
\end{algorithm}
\begin{algorithm}[t!]
\caption{Prune}
\label{alg:prune}
\begin{algorithmic}[1]
\Require Graph $G$, A set of landmarks $C_\epsilon$, parameter $\epsilon$
\Ensure Set of landmarks $C$
\State $C = C_\epsilon$
\ForAll{vertex $v \in C_\epsilon$}
    \If {v.marked = False}
        \State Run \textbf{$\epsilon$-BFS(G,v,$\epsilon$)}
    \Else
        \State $C = C \setminus \{v\}$
    \EndIf
\EndFor
\end{algorithmic}
\end{algorithm}

\section{Performance Analysis}\label{sec:experiment}\vspace{-.5em}
In this section, we experimentally and comparatively analyse the performance of the three proposed algorithms to compute an $\epsilon$-net of graphs. We discuss the effectiveness and efficiency of the algorithms and also validate that the $\epsilon$-net computed by any of these algorithms satisfy being $3$-approximation of the Vietoris-Rips complex.
\vspace{-1em}\subsection{Datasets}\vspace{-.5em}
We evaluate the performance of our algorithms using two real-world datasets. The dataset Power~\cite{konect:duncan98} is an unweighted graph of US Power-grid (4941 vertices, 6594 edges, diameter 46).
Celegans~\cite{badhwar2015control} is a weighted graph of Celegans worm's frontal neural network (297 vertices, 2148 edges, diameter 1.333). 
\vspace{-1em}
\subsection{Experimental Setup}\vspace{-.5em}
We implement the experimental workflow in C++. We use Snap library~\cite{leskovec2016snap} for graph processing and $\epsilon$-net computations. We modify the Ripser\Ripser library to compute lazy witness complexes and their persistent intervals.  We use R-TDA package~\cite{rTDA} to compute bottleneck distances. All experiments are run on a machine with an Intel(R) Xeon(R)@2.20GHz CPU and 80 GB memory limit. We set the lazy witness parameter $\nu = 1$ in all computations. We set the maximum value of the filtration parameter to the diameter of the corresponding dataset. We compute persistent intervals at dimension 0 and 1. 

 
\begin{figure}[t!]
    \centering
    \includegraphics[width=0.9\textwidth]{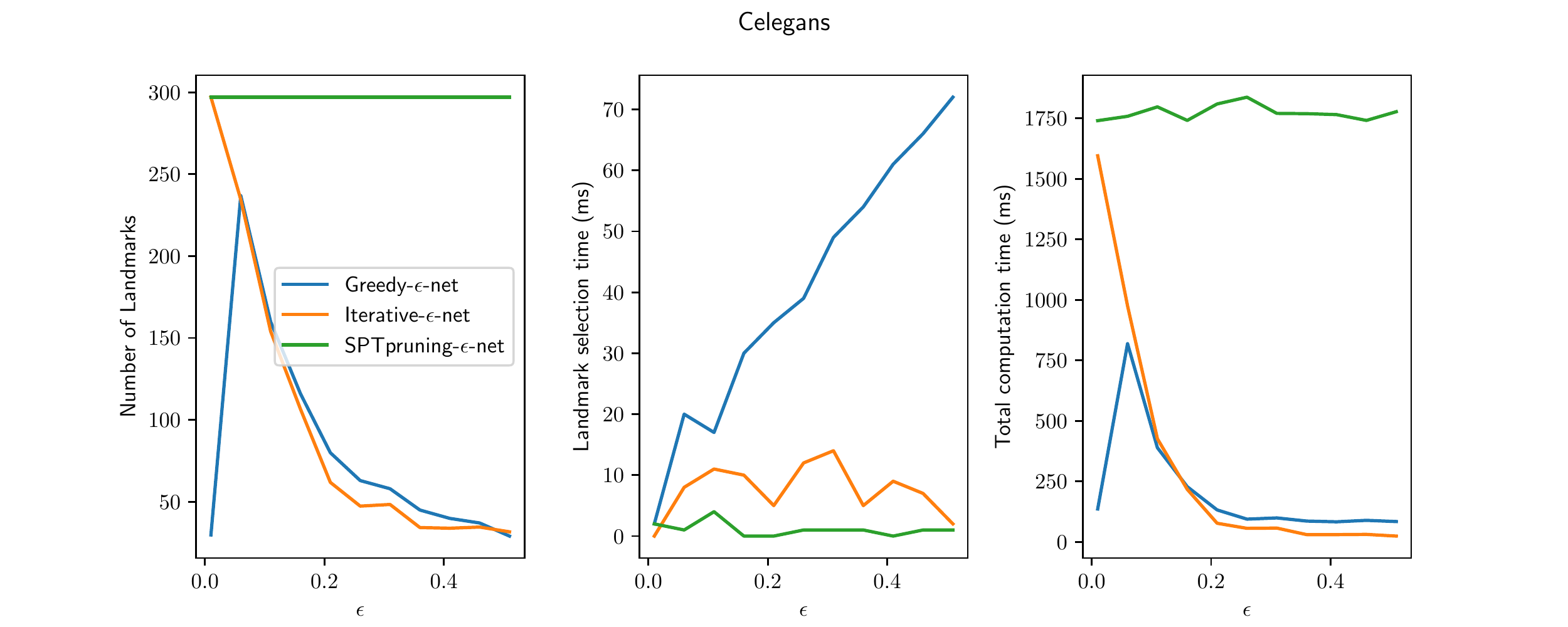}
    \caption{Performance of the algorithms on Celegans dataset.}
    \label{fig:effi_cel}
\end{figure}
\begin{figure}[t!]
    \centering
    \includegraphics[width=0.9\textwidth]{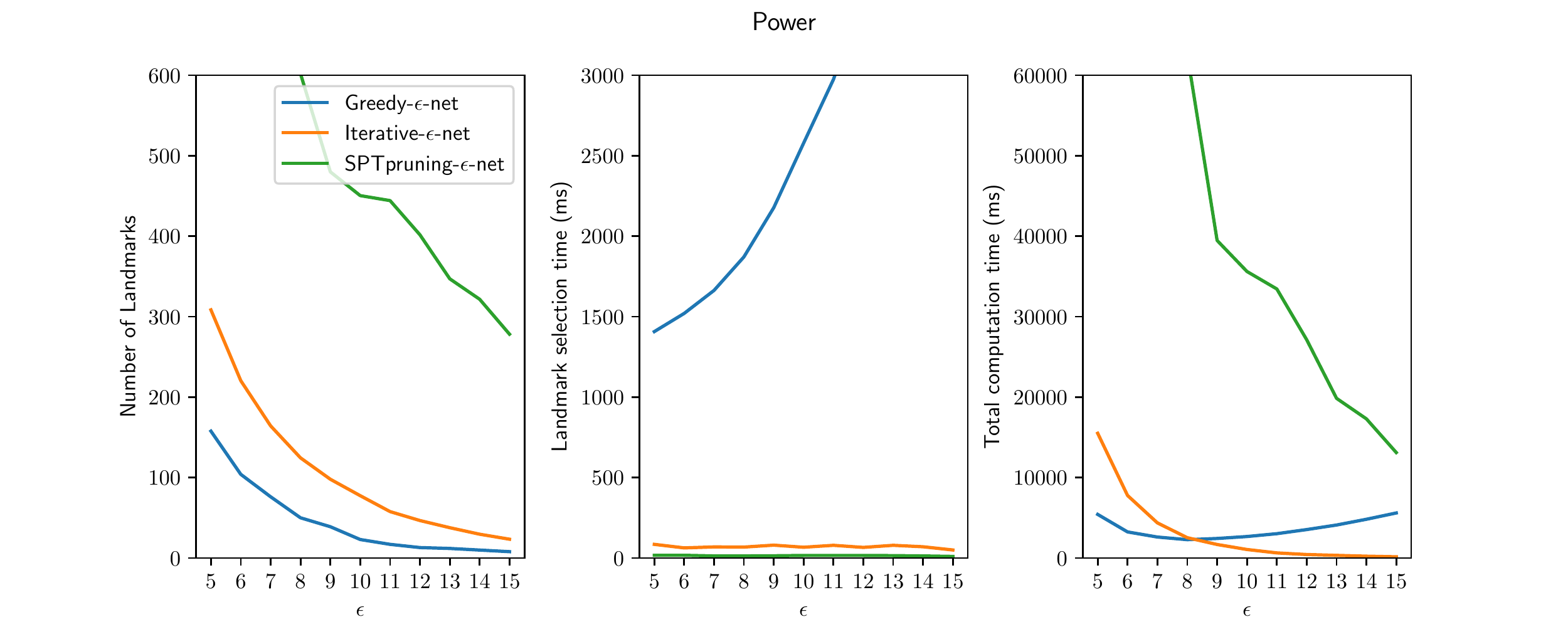}
    \caption{Performance of the algorithms on Power dataset.}
    \label{fig:effi_pow}
\end{figure}\vspace{-1em}
\subsection{Efficiency and Effectiveness}
We measure the efficiency of our algorithms in terms of CPU time (in ms) required to select $\epsilon$-net landmarks for a given $\epsilon$, and the overall computation of persistent homology of each graph. The overall total computation time includes the time an algorithm spends constructing $\epsilon$-net and time spent on computing persistent intervals at dimension 0 and 1. 
We measure the effectiveness of the algorithms using the number of landmarks they select by corresponding $\epsilon$-net construction.

Figures~\ref{fig:effi_cel} and~\ref{fig:effi_pow} illustrate the experimental results for the Celegans and the Power dataset respectively. We observe that the number of landmarks selected by all the algorithms decrease as the $\epsilon$-increases. 
The landmark selection time of the Greedy-$\epsilon$-Net increases as the $\epsilon$ increases independent of the dataset.
For other two algorithms, the landmark selection time varies depending on the density of the graph. 
The landmark computation time of Iterative-$\epsilon$-net and SPTpruning-$\epsilon$-net are almost invariant with $\epsilon$. 

We observe that Iterative-$\epsilon$-Net takes longer time ($>10$ ms) to select landmarks compared to the SPTpruning-$\epsilon$-Net algorithm but it selects less number of landmarks than SPTpruning-$\epsilon$-Net algorithm. Thus, the overall runtime of lazy witness complex construction using Iterative-$\epsilon$-Net is smaller than that of the SPTpruning-$\epsilon$-Net. The empirical performance analysis instantiates Iterative-$\epsilon$-Net as the practical and efficient choice to construct $\epsilon$-net induced lazy witness complex for graphs.

\begin{figure}[t!]
    \centering\vspace{.2em}
    \includegraphics[width=0.6\textwidth,trim={5cm 7.2cm 5cm 7.2cm}]{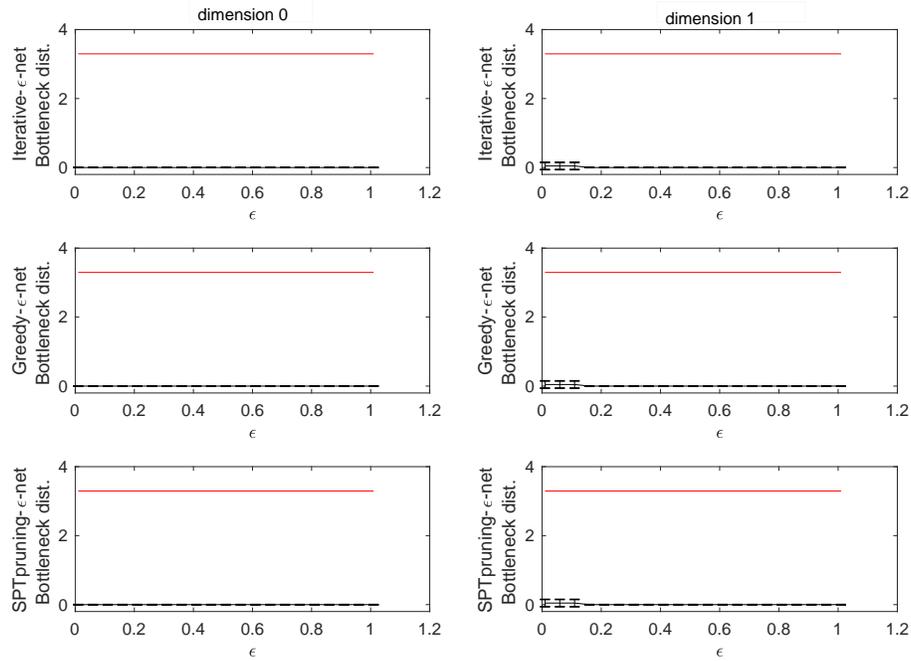}
    \caption{Validation of the approximation guarantee of the $\epsilon$-net induced lazy witness complexes for Celegans dataset. The bottlneck distance between the partial persistence diagram of the corresponding Vietoris-Rips and lazy witness filtrations is less than $3\log{3}$ in both dimensions 0 and 1.}
    \label{fig:val}
\end{figure}\vspace*{-2.5pt}


\subsection{Validating the Approximation Guarantee}\vspace*{-1em}
In order to validate the approximation guarantee in Theorem~\ref{thm:approx}, we construct the Vietoris-Rips complex and lazy witness complexes using $\epsilon$-nets for different values of $\epsilon$ and different algorithms. We compare the corresponding complexes by computing bottleneck distances between the persistence diagrams at dimension 0 and 1. We retain only the partial diagram with points in the diagram born after $2\epsilon$ for this purpose. Figure~\ref{fig:val} validates the guarantee on Celegans dataset. We omit the validation on Power dataset for the sake of brevity.

\section{Conclusion and Future Work}~\label{sec:conc}
We investigate the computation of persistent homologies on  weighted graphs. We extend the notion of $\epsilon$-net for point clouds to weighted graphs. We further propose an $\epsilon$-net induced lazy witness complex that leverages the $\epsilon$-net and their geodesic distances to select landmarks. 

We prove that an $\epsilon$-net of a connected graph is an $\epsilon$-approximation of its set of vertices in Hausdorff distance.
We also prove that the lazy witness complex induced by an $\epsilon$-net is a $3$-approximation of the Vietoris-Rips complex induced by the same set.
We prove the existence of an $\epsilon$-net (of a graph) of size at most $(\frac{\Delta}{\epsilon})^{O(\log{\frac{|V|}{\epsilon}})}$, where $\Delta$ is the diameter of the graph.
    
We present three algorithms for constructing an $\epsilon$-net of a graph. We comparatively and empirically evaluate the efficiency and effectiveness of the choice of landmarks that they induce for the topological data analysis of several real world graphs. The empirical performance analysis instantiates Iterative-$\epsilon$-Net as the practical and efficient choice to construct $\epsilon$-net induced lazy witness complex for graphs.

An interesting future work would be to leverage the notion of $\epsilon$-net induced lazy witness complex for faster and scalable computation of machine learning problems, such as clustering~\cite{chazal2013tomato}, deep learning~\cite{pmlr-v97-hofer19a}, kernel density estimation~\cite{carriere2017sliced}, for both graphs and point clouds.

\section*{Acknowledgement}
This work is partially supported by the National University of Singapore Institute for Data Science project WATCHA and by Singapore Ministry of Education project Janus.

\bibliographystyle{splncs03}
\bibliography{biblio} 
\end{document}